\newif\ifBASIC
\newif\ifWP
\newif\ifJOURNAL
\newif\ifFULL
\newif\ifLATIN
\newif\ifnotJOURNAL	
\newif\ifnotFULL	
\newif\ifnotLATIN	
  \newcommand{\GTPi}{vovk/shafer:2008CAPM}
  \newcommand{\GTPii}{GTP2}
  \newcommand{\CTiv}{vovk:FS}
  \newcommand{\CTvii}{vovk:arXiv1109}
  \newcommand{\GTPi}{GTP1}
  \newcommand{\GTPii}{GTP2}
  \newcommand{\CTiv}{GTP28arXiv}
  \newcommand{\CTvii}{vovk:arXiv1109}
  \newcommand{\GTPi}{vovk/shafer:2008CAPM}
  \newcommand{\GTPii}{GTP2}
  \newcommand{\CTiv}{vovk:FS-short}
  \newcommand{\CTvii}{vovk:arXiv1109}
\newcommand{\acknowledge}{This research has been supported in part
by NWO Rubicon grant 680-50-1010.}
  \newcommand{\bluebegin}{\begingroup\color{blue}}
  \newcommand{\blueend}{\endgroup}
\newcommand{\Vladimir}{Vladimir}
\newcommand{\dd}{\mathrm{d}}	
\newcommand{\K}{\mathcal{K}}	
\DeclareMathOperator{\III}{\boldsymbol{1}}  
\DeclareMathOperator{\e}{\mathrm{e}}	    
\newcommand{\bbbr}{\mathbb{R}}			
\newcommand{\bbbe}{\mathbb{E}}		
\DeclareMathOperator{\Expect}{\bbbe}
\newcommand{\bbbp}{\mathbb{P}}		
\DeclareMathOperator{\Prob}{\bbbp}
\DeclareMathOperator{\cov}{\textrm{cov}}		
\theoremstyle{plain}
\newtheorem{theorem}{Theorem}[section]
\newtheorem{proposition}[theorem]{Proposition}
\newtheorem{corollary}[theorem]{Corollary}
\newtheorem{lemma}[theorem]{Lemma}
\theoremstyle{definition}
  \title{The Capital Asset Pricing Model as a corollary of the Black--Scholes model}
  \author{Vladimir Vovk}
  \date{September 23, 2011}
  \title{The Capital Asset Pricing Model as a corollary of the Black--Scholes model}
  \author{Vladimir Vovk}
\begin{document}
\ifJOURNAL
  \begin{frontmatter}
    \title{The Capital Asset Pricing Model as a corollary of the Black--Scholes model}
    \runtitle{Capital Asset Pricing Model}
    \author{\fnms{Vladimir} \snm{Vovk}\ead[label=e1]{vovk@cs.rhul.ac.uk}\thanksref{t1}}
    \thankstext{t1}{\acknowledge}
    \address{Department of Computer Science\\Royal Holloway, University of London\\
      Egham, Surrey TW20 0EX\\United Kingdom\\\printead{e1}}
    \affiliation{Royal Holloway, University of London}
    \runauthor{Vladimir Vovk}
\fi

\ifnotJOURNAL
  \maketitle
\fi

\begin{abstract}
  We consider a financial market in which two securities are traded:
  a stock and an index.
  Their prices are assumed to satisfy the Black--Scholes model.
  Besides assuming that the index is a tradable security,
  we also assume that it is \emph{efficient}, in the following sense:
  we do not expect a prespecified self-financing trading strategy
  whose wealth is almost surely nonnegative at all times
  to outperform the index greatly.
  We show that, for a long investment horizon, the appreciation rate of the stock
  has to be close to the interest rate (assumed constant)
  plus the covariance between the volatility vectors of the stock and the index.
  This contains both a version of the Capital Asset Pricing Model
  and our earlier result that the equity premium
  is close to the squared volatility of the index.
  \ifFULL\bluebegin
    In my terminology and notation,
    I will try to follow \cite{musiela/rutkowski:2005}.
  \blueend\fi
\end{abstract}

\ifJOURNAL
    \begin{keyword}[class=AMS]
      \kwd[Primary ]{91G20}
      \kwd[; secondary ]{62F25}
      \kwd{62P05}
    \end{keyword}

    \begin{keyword}
      \kwd{Capital Asset Pricing Model}
      \kwd{Efficient Index Hypothesis}
      \kwd{Efficient Market Hypothesis}
      \kwd{Black--Scholes model}
    \end{keyword}
  \end{frontmatter}
\fi

\epigraph{For me, the strongest evidence
suggesting that markets are generally quite efficient
is that professional investors do not beat the market.}
{Burton G. Malkiel \cite{malkiel:2005}}

\section{Introduction}

This article continues study of the \emph{efficient index hypothesis} (\emph{EIH}),
introduced in \cite{shafer/vovk:2001} (under a different name)
and later studied in \cite{\GTPi} and \cite{\CTvii}.
The EIH is a hypothesis about a specific index $I_t$, such as FTSE 100.
Let $\Sigma$ be any trading strategy that is \emph{prudent},
in the sense of its wealth process being nonnegative almost surely
at all times.
(We consider only self-financing trading strategies in this article.)
Trading occurs over the time period $[0,T]$,
where the investment horizon $T>0$ is fixed throughout the article,
and we assume that $I_0>0$.
The EIH says that, as long as $\Sigma$ is chosen in advance
and its initial wealth $\K_0$ is positive, $\K_0>0$,
we do not expect $\K_T/\K_0$,
where $\K_T$ is its final wealth,
to be much larger than $I_T/I_0$.

The EIH is similar to the Efficient Market Hypothesis
(EMH; see \cite{fama:1970} and \cite{malkiel:2005} for surveys)
and in some form is considered to be evidence in favour of the EMH
(see the epigraph above).
But it is also an interesting hypothesis in its own right.
For example, in this article we will see
that in the framework of the Black--Scholes model
it implies a version of the Capital Asset Pricing Model (CAPM),
whereas the EMH is almost impossible to disentangle
from the CAPM or similar asset pricing models
(see, e.g., \cite{fama:1970}, III.A.6).

Several remarks about the EIH are in order
(following \cite{\CTvii}):
\begin{itemize}
\item
  Our mathematical results do not depend on the EIH,
  which is only used in their interpretation.
  They are always of the form: either some interesting relation holds
  or a given prudent trading strategy outperforms the index greatly
  (almost surely or with a high probability).
\item
  Even when using the EIH in the interpretation of our results,
  we do not need the full EIH:
  we apply it only to very basic trading strategies.
\item
  Our prudent trading strategies can still lose all their initial wealth
  (they are only prudent in the sense of not losing more than the initial wealth).
  A really prudent investor would invest only part of her capital
  in such strategies.
\end{itemize}

We start the rest of the article by proving a result
about the ``theoretical performance deficit''
(in the terminology of \cite{\GTPi})
of a stock $S_t$ as compared with the index $I_t$,
Namely, in Section~\ref{sec:TPD} we show that,
for a long investment horizon and assuming the EIH,
\begin{equation}\label{eq:TPD}
  \ln\frac{S_T/S_0}{I_T/I_0}
  \approx
  -\frac{\left\|\sigma_S-\sigma_I\right\|^2}{2}
  T,
\end{equation}
where $I_0$ is assumed positive
and $\sigma_S$ and $\sigma_I$ are the volatility vectors
(formally defined in Section~\ref{sec:TPD})
for the stock and the index.
We can call $\left\|\sigma_S-\sigma_I\right\|^2/2$
the theoretical performance deficit
as it can be attributed to insufficient diversification of $S_t$
as compared to $I_t$.
Section~\ref{sec:CAPM} deduces a version of the CAPM
from (\ref{eq:TPD});
this version is similar to the one obtained in \cite{\GTPi}
but our interpretation and methods are very different.
Section~\ref{sec:conclusion} concludes.

\section{Theoretical performance deficit}
\label{sec:TPD}

The value of the index at time $t$ is denoted $I_t$
and the value of the stock is denoted $S_t$.
We assume that these two securities satisfy
the multi-dimensional Black--Scholes model
\begin{equation}\label{eq:physical-1}
  \begin{cases}
    \frac{\dd I_t}{I_t}
    =
    \mu_I\dd t + \sigma_{I,1}\dd W^1_t + \cdots + \sigma_{I,d}\dd W^d_t
    \\
    \frac{\dd S_t}{S_t}
    =
    \mu_S\dd t + \sigma_{S,1}\dd W^1_t + \cdots + \sigma_{S,d}\dd W^d_t,
  \end{cases}
\end{equation}
where $W^1,\ldots,W^d$ are independent standard Brownian motions.
For simplicity, we also assume,
without loss of generality, that $I_0=1$ and $S_0=1$.
The parameters of the model are the appreciation rates $\mu_I,\mu_S\in\bbbr$
and the volatility vectors $\sigma_I:=(\sigma_{I,1},\ldots,\sigma_{I,d})^{\mathrm{T}}$
and $\sigma_S:=(\sigma_{S,1},\ldots,\sigma_{S,d})^{\mathrm{T}}$.
We assume $\sigma_I\ne\sigma_S$, $\sigma_I\ne0$, and $\sigma_S\ne0$.
The number of ``sources of randomness'' $W^1,\ldots,W^d$ in our market is $d\ge2$.
The interest rate $r$ is constant.
We interpret $\e^{rt}$ as the price of a zero-coupon bond at time $t$.

\ifFULL\bluebegin
  We assume that the index and the stock are part of a complete market
  with the $d$ sources of randomness.
  Namely, our results will be true in either of the following two models:
  \begin{itemize}
  \item
    There are $d$ stocks $S^1,\ldots,S^d$ in the market satisfying
    \begin{equation}\label{eq:stock}
      \frac{\dd S^i_t}{S_t}
      =
      \mu_i\dd t + \sigma_{i,1}\dd W^1_t + \cdots + \sigma_{i,d}\dd W^d_t.
    \end{equation}
    Our two basic securities $S$ and $I$ are $S:=S^1$
    (so that $\mu_S=\mu_1$ and $\sigma_{S,j}=\sigma_{1,j}$ for $j=1,\ldots,d$)
    and $I_t:=\sum_{i=1}^d c_i S^i_t$
    (so that $\mu_I=\sum_{i=1}^d c_i\mu_i$
    and $\sigma_{I,j}=\sum_{i=1}^d c_i\sigma_{i,j}$ for $j=1,\ldots,d$),
    where $c_i$ are positive constants
    (intuitively, $c_i$ is the number of shares outstanding of stock $i$).
    The $d\times d$ matrix $\sigma_{i,j}$ is assumed non-singular.
  \item
    There are $d$ securities $S^1,\ldots,S^{d}$ in the market satisfying
    (\ref{eq:stock}).
    Securities $S^1,\ldots,S^{d-1}$ are stocks,
    and stock $S^1$ is identical to $S$
    (so that $\mu_S=\mu_1$ and $\sigma_{S,j}=\sigma_{1,j}$ for $j=1,\ldots,d$).
    Security $S^d$ is the index $I_t$
    (so that $\mu_I=\mu_d$ and $\sigma_{I,j}=\sigma_{d,j}$ for $j=1,\ldots,d$).
    The matrix $\sigma_{i,j}$ is assumed non-singular.
  \end{itemize}
  In both models,
  the number $d$ of sources of randomness coincides with the number of securities,
  and so the market is complete.
\blueend\fi

Let us say that a prudent trading strategy \emph{beats the index by a factor of $c$}
if its wealth process $\K_t$ satisfies $\K_0>0$ and $\K_T/\K_0=c I_T$.
Let $N_{0,1}$ be the standard Gaussian distribution on $\bbbr$
and $z_p$, $p>0$, be its upper $p$-quantile,
defined by the requirement $\Prob(\xi\ge z_p)=p$, $\xi\sim N_{0,1}$, when $p\in(0,1)$,
and defined as $-\infty$ when $p\ge1$.
We start from the following proposition.
\begin{proposition}\label{prop:2-sided}
  Let $\delta>0$.
  There is a prudent trading strategy
  $\Sigma=\Sigma(\sigma_I,\sigma_S,r,T,\delta)$
  that, almost surely, beats the index by a factor of $1/\delta$ unless
  \begin{equation}\label{eq:2-sided}
    \left|
      \ln\frac{S_T}{I_T}
      +
      \frac{\left\|\sigma_S-\sigma_I\right\|^2}{2}
      T
    \right|
    <
    z_{\delta/2}
    \left\|\sigma_S-\sigma_I\right\|
    \sqrt{T}.
  \end{equation}
\end{proposition}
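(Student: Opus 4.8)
The plan is to change the num\'eraire to the index, which turns the proposition into a one-dimensional Gaussian computation, and then to let $\Sigma$ be the self-financing strategy replicating a digital-type payoff on the ratio $S_T/I_T$. Concretely, because the stock and the index sit inside a complete Black--Scholes market with $d$ non-singular securities, there is a probability measure $\mathbb{Q}$, equivalent to the physical one, under which the $I_t$-discounted price of every traded security is a martingale; in particular $R_t:=S_t/I_t$ is a positive $\mathbb{Q}$-martingale, and any bounded terminal payoff $X$ replicable by a self-financing strategy has replication cost $\Expect_{\mathbb{Q}}[X/I_T]$ (recall $I_0=1$), with the replicating strategy's wealth at time $t$ equal to $I_t\,\Expect_{\mathbb{Q}}[X/I_T\givn\mathcal{F}_t]$.

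Next I would compute the $\mathbb{Q}$-law of $\ln(S_T/I_T)$. It\^o's formula applied to (\ref{eq:physical-1}) gives $\dd R_t/R_t=(\cdots)\,\dd t+(\sigma_S-\sigma_I)^{\mathrm{T}}\dd W_t$ with the \emph{constant} volatility vector $\sigma_S-\sigma_I$; since $R_t$ is a $\mathbb{Q}$-martingale, Girsanov's theorem removes the drift and leaves $\dd R_t/R_t=(\sigma_S-\sigma_I)^{\mathrm{T}}\dd\widetilde W_t$ for a $d$-dimensional $\mathbb{Q}$-Brownian motion $\widetilde W$. Hence, using $R_0=1$,
\[
  \ln\frac{S_T}{I_T}
  =
  (\sigma_S-\sigma_I)^{\mathrm{T}}\widetilde W_T-\frac{\left\|\sigma_S-\sigma_I\right\|^2}{2}T,
\]
so that under $\mathbb{Q}$ the quantity $\ln(S_T/I_T)+\tfrac12\|\sigma_S-\sigma_I\|^2 T$ is $N(0,\|\sigma_S-\sigma_I\|^2 T)$. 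Writing $E$ for the event (\ref{eq:2-sided}) and standardising, I would read off $\mathbb{Q}(E)=\Prob(|\xi|<z_{\delta/2})=1-2\Prob(\xi\ge z_{\delta/2})=1-\delta$ for $\xi\sim N_{0,1}$ (here I assume $\delta\le1$; for $\delta>1$ the inequality (\ref{eq:2-sided}) is vacuous and the statement degenerates), whence $\mathbb{Q}(E^{c})=\delta$.

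Finally I would let $\Sigma$ be the self-financing strategy replicating the bounded payoff $X:=\delta^{-1}I_T\,\III_{E^{c}}$. Then its initial wealth is $\K_0=\Expect_{\mathbb{Q}}[X/I_T]=\delta^{-1}\mathbb{Q}(E^{c})=1>0$; its wealth at time $t$ is $\K_t=I_t\,\Expect_{\mathbb{Q}}[\delta^{-1}\III_{E^{c}}\givn\mathcal{F}_t]\ge0$, so $\Sigma$ is prudent; and it depends only on $\sigma_I,\sigma_S,r,T,\delta$. On the event $E^{c}$, i.e.\ whenever (\ref{eq:2-sided}) fails, $\K_T=X=\delta^{-1}I_T=(1/\delta)\K_0 I_T$, so $\Sigma$ beats the index by a factor of $1/\delta$ --- exactly the assertion.

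The main obstacle is the first step: the existence of the index-num\'eraire measure $\mathbb{Q}$ and of a replicating strategy for $X$ rests on completeness of the Black--Scholes market, which is why the stock and the index are embedded in a market with $d$ non-singular securities in the first place. A secondary, purely technical wrinkle is that $X$ is a digital-type payoff, so the holdings of the replicating strategy are unbounded as $t\to T$ near the discontinuities of $\III_{E^{c}}$; this is harmless, since the strategy remains self-financing with a nonnegative wealth process (in units of $I_t$ it is simply the nonnegative $\mathbb{Q}$-martingale $\Expect_{\mathbb{Q}}[\delta^{-1}\III_{E^{c}}\givn\mathcal{F}_t]$). Everything else, in particular the Gaussian computation of $\mathbb{Q}(E)$, is routine.
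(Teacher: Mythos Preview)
Your argument is correct and takes a genuinely different route from the paper. The paper works under the bond num\'eraire (the usual risk-neutral measure): it first projects onto the two-dimensional span of $\sigma_I$ and $\sigma_S$ to obtain a complete two-asset market, then prices the two one-sided claims $I_T\III\{S_T/I_T\ge b\}$ and $I_T\III\{S_T/I_T\le a\}$ via an explicit Gaussian integral (the lemma $\Expect(\e^{u\cdot\xi}\III\{v\cdot\xi\ge c\})=\e^{\|u\|^2/2}F((u\cdot v-c)/\|v\|)$), chooses $a,b$ so that each costs $\delta/2$, and sums the two replicating strategies. You instead change num\'eraire to $I_t$, under which $R_t=S_t/I_t$ is a positive martingale with constant volatility $\sigma_S-\sigma_I$; this makes $\ln(S_T/I_T)+\tfrac12\|\sigma_S-\sigma_I\|^2T$ exactly $N(0,\|\sigma_S-\sigma_I\|^2T)$ under $\mathbb{Q}$, so the $\mathbb{Q}$-probability of $E^c$ is read off as $\delta$ without any auxiliary lemma, and a single digital claim $\delta^{-1}I_T\III_{E^c}$ does the job. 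Your approach is shorter and conceptually cleaner (the change of num\'eraire absorbs the $\e^{u\cdot\xi}$ factor that the paper has to integrate by hand), and it makes the replication one-dimensional: since $X/I_T$ depends only on $R_T$, it is hedged entirely by trading $S$ against $I$, so you do not actually need the ambient $d$-asset completeness you invoke---the one-dimensional market in $R$ is already complete. The paper's approach, by contrast, stays within textbook Black--Scholes pricing and makes the dimension reduction and replication fully explicit, which some readers may find more transparent; its splitting into two one-sided claims is also what yields Proposition~\ref{prop:1-sided} with essentially no extra work.
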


We assumed $\sigma_S\ne0$,
but Proposition~\ref{prop:2-sided} remains true
when applied to the bond $B_t:=\e^{rt}$ in place of the stock $S_t$.
In this case (\ref{eq:2-sided}) reduces to
\begin{equation}\label{eq:recover}
  \left|
    \ln\frac{I_T}{\e^{rT}}
    -
    \frac{\left\|\sigma_I\right\|^2}{2}
    T
  \right|
  <
  z_{\delta/2}
  \left\|\sigma_I\right\|
  \sqrt{T}.
\end{equation}
Informally, (\ref{eq:recover}) says that the index outperforms the bond
approximately by a factor of $\e^{\left\|\sigma_I\right\|^2T/2}$.
For a proof of this statement
(which is similar to, but simpler than, the proof of Proposition~\ref{prop:2-sided}
given later in this section),
see \cite{\CTvii}, Proposition~2.1.

In the next section we will need the following one-sided version
of Proposition~\ref{prop:2-sided}.
\begin{proposition}\label{prop:1-sided}
  Let $\delta>0$.
  There is a prudent trading strategy
  $\Sigma=\Sigma(\sigma_I,\sigma_S,r,T,\delta)$
  that, almost surely, beats the index by a factor of $1/\delta$ unless
  \begin{equation}\label{eq:1-sided}
    \ln\frac{S_T}{I_T}
    +
    \frac{\left\|\sigma_S-\sigma_I\right\|^2}{2}
    T
    <
    z_{\delta}
    \left\|\sigma_S-\sigma_I\right\|
    \sqrt{T}.
  \end{equation}
  There is another prudent trading strategy
  $\Sigma=\Sigma(\sigma_I,\sigma_S,r,T,\delta)$
  that, almost surely, beats the index by a factor of $1/\delta$ unless
  \begin{equation*} 
    \ln\frac{S_T}{I_T}
    +
    \frac{\left\|\sigma_S-\sigma_I\right\|^2}{2}
    T
    >
    -z_{\delta}
    \left\|\sigma_S-\sigma_I\right\|
    \sqrt{T}.
  \end{equation*}
\end{proposition}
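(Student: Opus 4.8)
\emph{Plan.}
The plan is to prove both assertions the same way, by constructing each strategy so that its wealth replicates a digital (cash-or-nothing) option written on $S_T/I_T$, working in the numeraire defined by the index. Write $\sigma:=\|\sigma_S-\sigma_I\|$, which is positive by assumption, and $G:=\ln(S_T/I_T)+\tfrac{\sigma^2}{2}T$, so that the two exceptional events of the proposition are $\{G<z_\delta\sigma\sqrt T\}$ and $\{G>-z_\delta\sigma\sqrt T\}$. First I would record, from (\ref{eq:physical-1}) and Itô's formula, that $\hat S_t:=S_t/I_t$ is a geometric Brownian motion with $\hat S_0=1$ and $\dd\ln\hat S_t=(\cdots)\dd t+(\sigma_S-\sigma_I)\cdot\dd W_t$, i.e.\ a one-dimensional Black--Scholes stock of volatility $\sigma$; and that in the index numeraire, in which the index itself is the constant $1$, the riskless rate is $0$. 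A claim of the form $g(\hat S_T)$ therefore depends on a single Brownian motion, so (by the Black--Scholes PDE, or the martingale representation theorem) it is replicable by continuously trading $\hat S$ and the numeraire. Since holding $\Delta_t$ ``units of $\hat S$'' and $\psi_t$ ``units of the numeraire'' is the same as holding $\Delta_t$ shares of $S$ and $\psi_t$ shares of $I$, and the self-financing property is invariant under change of numeraire, the replicating strategy is self-financing in the original market and trades only in $S$ and $I$.

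The one genuine computation is the law of $G$ under the (rate-zero) index-numeraire pricing measure $\hat Q$, the one under which $\hat S_t$ is a martingale: there $\ln\hat S_T$ is Gaussian with mean $-\sigma^2T/2$ and variance $\sigma^2T$, so $G=\ln\hat S_T+\sigma^2T/2$ is $N_{0,\sigma^2T}$, i.e.\ $G/(\sigma\sqrt T)\sim N_{0,1}$, under $\hat Q$. (This incidentally recovers (\ref{eq:TPD}): the $\hat Q$-mean of $\ln(S_T/I_T)=G-\sigma^2T/2$ is $-\sigma^2T/2$.) Now fix $\delta\in(0,1)$; the case $\delta\ge1$, where $z_\delta=-\infty$ and the exceptional inequalities are degenerate, I omit. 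Let $\Sigma$ be the strategy replicating the claim that pays $1/\delta$ units of the index at time $T$ on the event $\{G\ge z_\delta\sigma\sqrt T\}$ and nothing otherwise; its wealth process is $\K_t=I_t\hat V_t$ with $\hat V_t=\tfrac1\delta\,\Expect_{\hat Q}[\III_{\{G\ge z_\delta\sigma\sqrt T\}}\givn\mathcal{F}_t]\in[0,1/\delta]$, and since $I_t>0$ the strategy is prudent. Its initial wealth is $\K_0=\hat V_0=\tfrac1\delta\,\hat Q(G\ge z_\delta\sigma\sqrt T)=\tfrac1\delta\,\Prob(\xi\ge z_\delta)=\tfrac1\delta\cdot\delta=1>0$, where $\xi\sim N_{0,1}$. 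On the event $\{G\ge z_\delta\sigma\sqrt T\}$, which is exactly the complement of (\ref{eq:1-sided}), we have $\hat V_T=1/\delta$, hence $\K_T=I_T/\delta$ and $\K_T/\K_0=(1/\delta)I_T$: that is, $\Sigma$ beats the index by a factor of $1/\delta$ precisely when (\ref{eq:1-sided}) fails, which is the first assertion. The second strategy is the same construction with $\{G\ge z_\delta\sigma\sqrt T\}$ replaced by $\{G\le-z_\delta\sigma\sqrt T\}$; by symmetry of $N_{0,1}$ its price is again $\tfrac1\delta\cdot\delta=1$, it is prudent, and it beats the index by a factor of $1/\delta$ on the complement of the second displayed inequality. (The same scheme proves Proposition~\ref{prop:2-sided}, the digital there paying $1/\delta$ on $\{|G|\ge z_{\delta/2}\sigma\sqrt T\}$, an event of $\hat Q$-probability $2(\delta/2)=\delta$.)

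The routine parts left to fill in are the Itô and Girsanov computations for the $\hat Q$-dynamics of $\hat S$ (hence the Gaussian law of $G$), and the verification that $\K_t=I_t\hat V_t$ really is the value of a self-financing portfolio in $S$ and $I$ with the stated terminal wealth --- this follows by writing $\dd\hat V_t=\theta_t\,\dd\tilde B_t$ via martingale representation (with $\tilde B$ the $\hat Q$-Brownian motion driving $\hat S$, so $\dd\hat S_t=\sigma\hat S_t\,\dd\tilde B_t$), whence $\hat V_t$ is replicated by holding $\theta_t/(\sigma\hat S_t)$ units of $\hat S$, plus numeraire invariance. I do not anticipate a real obstacle: once the reduction to the index numeraire is made, the proposition is just the fact that $G/(\sigma\sqrt T)$ is standard Gaussian under that pricing measure, together with the triviality that a claim paying $1/\delta$ on an event of pricing-probability $\delta$ (and $0$ off it) costs exactly $1$. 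The only step wanting a little care is making the change of numeraire rigorous: that the replicating strategy never needs a security other than $S$ and $I$, and that the passage from the original $d$-dimensional filtration to the one generated by $(\sigma_S-\sigma_I)\cdot W$ --- on which the payoff alone depends --- causes no trouble.
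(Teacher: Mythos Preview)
Your proposal is correct, and the underlying idea---replicate a digital claim that pays a multiple of $I_T$ on the complement of the exceptional event---is exactly the paper's. The execution differs, though. The paper (which only says Proposition~\ref{prop:1-sided} ``can be proved analogously'' to Proposition~\ref{prop:2-sided}) works in the bond numeraire: it passes to the risk-neutral measure, projects the $d$ Brownian motions onto the two-dimensional span of $\sigma_I$ and $\sigma_S$, and evaluates the time-$0$ price of $I_T\III\{S_T/I_T\ge b\}$ by a direct Gaussian-integral lemma, obtaining $F\bigl(-(\tfrac{\sigma^2}{2}T+\ln b)/(\sigma\sqrt T)\bigr)$; the threshold $b$ is then chosen so that this price equals $\delta$. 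You instead change numeraire to the index: under the index-numeraire martingale measure $\hat Q$ the ratio $\hat S_t=S_t/I_t$ is a one-dimensional geometric Brownian motion of volatility $\sigma$, so $G/(\sigma\sqrt T)$ is standard normal and the price computation collapses to the tautology $\tfrac1\delta\cdot\hat Q(G\ge z_\delta\sigma\sqrt T)=1$. Your route is conceptually cleaner and makes it transparent that replication needs only $S$ and $I$ (no bond position); the paper's route is more elementary in that it stays within the classical risk-neutral framework and avoids invoking the change-of-numeraire theorem or the existence of $\hat Q$, at the cost of the auxiliary lemma and a slightly longer calculation.
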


In the rest of this section we will prove Proposition~\ref{prop:2-sided}
(Proposition~\ref{prop:1-sided} can be proved analogously).
Without loss of generality suppose $\delta\in(0,1)$.
We let $W_t$ stand for the $d$-dimensional Brownian motion
$W_t:=(W^1_t,\ldots,W^d_t)^{\mathrm{T}}$.
The market (\ref{eq:physical-1}) is incomplete when $d>2$,
as it has too many sources of randomness,
so we start from removing superfluous sources of randomness.

The standard solution to (\ref{eq:physical-1}) is
\begin{equation}\label{eq:solution-1}
  \begin{cases}
    I_t
    =
    \e^{(\mu_I-\left\|\sigma_I\right\|^2/2)t+\sigma_I\cdot W_t}\\
    S_t
    =
    \e^{(\mu_S-\left\|\sigma_S\right\|^2/2)t+\sigma_S\cdot W_t}.
  \end{cases}
\end{equation}
Choose two vectors $e^1,e^2\in\bbbr^d$ that form an orthonormal basis
in the 2-dimensional subspace of $\bbbr^d$ spanned by $\sigma_I$ and $\sigma_S$.
Set $\bar W^1_t:=e^1\cdot W_t$ and $\bar W^2_t:=e^2\cdot W_t$;
these are standard independent Brownian motions.
Let the decompositions of $\sigma_I$ and $\sigma_S$ in the basis $(e^1,e^2)$ be
$\sigma_I=\bar\sigma_{I,1}e^1+\bar\sigma_{I,2}e^2$
and $\sigma_S=\bar\sigma_{S,1}e^1+\bar\sigma_{S,2}e^2$.
Define $\bar\sigma_I:=(\bar\sigma_{I,1},\bar\sigma_{I,2})^{\mathrm{T}}\in\bbbr^2$
and $\bar\sigma_S:=(\bar\sigma_{S,1},\bar\sigma_{S,2})^{\mathrm{T}}\in\bbbr^2$,
and define $\bar W_t$ as the 2-dimensional Brownian motion
$\bar W_t:=(\bar W^1_t,\bar W^2_t)^{\mathrm{T}}$.
We can now rewrite (\ref{eq:solution-1}) as
\begin{equation*} 
  \begin{cases}
    I_t
    =
    \e^{(\mu_I-\left\|\bar\sigma_I\right\|^2/2)t+\bar\sigma_I\cdot\bar W_t}\\
    S_t
    =
    \e^{(\mu_S-\left\|\bar\sigma_S\right\|^2/2)t+\bar\sigma_S\cdot\bar W_t}.
  \end{cases}
\end{equation*}
In terms of our new parameters and Brownian motions,
(\ref{eq:physical-1}) can be rewritten as
\begin{equation}\label{eq:physical-2}
  \begin{cases}
    \frac{\dd I_t}{I_t}
    =
    \mu_I\dd t + \bar\sigma_{I,1}\dd\bar W^1_t + \bar\sigma_{I,2}\dd\bar W^2_t
    \\
    \frac{\dd S_t}{S_t}
    =
    \mu_S\dd t + \bar\sigma_{S,1}\dd\bar W^1_t + \bar\sigma_{S,2}\dd\bar W^2_t.
  \end{cases}
\end{equation}
The risk-neutral version of (\ref{eq:physical-2}) is
\begin{equation*} 
  \begin{cases}
    \frac{\dd I_t}{I_t}
    =
    r\dd t + \bar\sigma_{I,1}\dd\bar W^1_t + \bar\sigma_{I,2}\dd\bar W^2_t
    \\
    \frac{\dd S_t}{S_t}
    =
    r\dd t + \bar\sigma_{S,1}\dd\bar W^1_t + \bar\sigma_{S,2}\dd\bar W^2_t,
  \end{cases}
\end{equation*}
whose solution is
$$
  \begin{cases}
    I_t
    =
    \e^{(r-\left\|\bar\sigma_I\right\|^2/2)t+\bar\sigma_I\cdot\bar W_t}\\
    S_t
    =
    \e^{(r-\left\|\bar\sigma_S\right\|^2/2)t+\bar\sigma_S\cdot\bar W_t}.
  \end{cases}
$$

Let $b\in\bbbr$
and let $\III\{\ldots\}$ be defined to be $1$
if the condition in the curly braces is satisfied
and $0$ otherwise.
The Black--Scholes price at time $0$ of the European contingent claim
paying $I_T\III\{S_T/I_T\ge b\}$ at time $T$ is
\begin{multline}\label{eq:to-continue}
  \e^{-rT}
  \Expect
  \left(
    \e^{(r-\left\|\bar\sigma_I\right\|^2/2)T+\sqrt{T}\bar\sigma_I\cdot\xi}
    \III
    \left\{
      \frac
        {\e^{(r-\left\|\bar\sigma_S\right\|^2/2)T+\sqrt{T}\bar\sigma_S\cdot\xi}}
        {\e^{(r-\left\|\bar\sigma_I\right\|^2/2)T+\sqrt{T}\bar\sigma_I\cdot\xi}}
	\ge
	b
    \right\}
  \right)\\
  =
  \e^{-\left\|\bar\sigma_I\right\|^2T/2}
  \Expect
  \left(
    \e^{\sqrt{T}\bar\sigma_I\cdot\xi}
    \III
    \left\{
      \sqrt{T}(\bar\sigma_S-\bar\sigma_I)\cdot\xi
      \ge
      \ln b
      +
      \frac{\left\|\bar\sigma_S\right\|^2-\left\|\bar\sigma_I\right\|^2}{2}
      T
    \right\}
  \right),
\end{multline}
where $\xi\sim N^2_{0,1}$.
To continue our calculations,
we will need the following lemma.
\begin{lemma}
  Let $u,v\in\bbbr^2$, $v\ne0$, $c\in\bbbr$, and $\xi\sim N_{0,1}^2$.
  Then
  $$
    \Expect
    \left(
      \e^{u\cdot\xi}
      \III\{v\cdot\xi\ge c\}
    \right)
    =
    \e^{\left\|u\right\|^2/2}
    F
    \left(
      \frac{u\cdot v-c}{\left\|v\right\|}
    \right),
  $$
  where $F$ is the distribution function of $N_{0,1}$.
\end{lemma}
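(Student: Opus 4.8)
The plan is to reduce the two-dimensional Gaussian integral to a one-dimensional one by completing the square in the exponent. Writing the expectation explicitly,
\[
  \Expect\left(\e^{u\cdot\xi}\III\{v\cdot\xi\ge c\}\right)
  =
  \frac{1}{2\pi}\int_{\{x\,:\,v\cdot x\ge c\}}
  \e^{u\cdot x-\left\|x\right\|^2/2}\,\dd x ,
\]
I would apply the identity $u\cdot x-\left\|x\right\|^2/2=\left\|u\right\|^2/2-\left\|x-u\right\|^2/2$ to pull out the factor $\e^{\left\|u\right\|^2/2}$ and recognise the rest as the integral, over the half-plane $\{v\cdot x\ge c\}$, of the density of a Gaussian random vector with mean $u$ and identity covariance. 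Equivalently, substituting $x=u+\eta$ with $\eta\sim N_{0,1}^2$,
\[
  \Expect\left(\e^{u\cdot\xi}\III\{v\cdot\xi\ge c\}\right)
  =
  \e^{\left\|u\right\|^2/2}\,
  \Prob\left(v\cdot\eta\ge c-u\cdot v\right) .
\]

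Next I would evaluate the remaining probability. Since $\eta\sim N_{0,1}^2$ and $v\ne0$, the scalar $v\cdot\eta$ is a nondegenerate centred Gaussian with variance $\left\|v\right\|^2$, so $(v\cdot\eta)/\left\|v\right\|\sim N_{0,1}$ and
\[
  \Prob\left(v\cdot\eta\ge c-u\cdot v\right)
  =
  \Prob\left(\zeta\ge\frac{c-u\cdot v}{\left\|v\right\|}\right) ,
  \qquad
  \zeta\sim N_{0,1} .
\]
Finally, using the symmetry of $N_{0,1}$ in the form $\Prob(\zeta\ge a)=F(-a)$, this equals $F\bigl((u\cdot v-c)/\left\|v\right\|\bigr)$, which is exactly the claimed expression.

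There is essentially no serious obstacle here; the only points demanding care are the bookkeeping of signs — so that the argument of $F$ emerges as $u\cdot v-c$ rather than its negative — and the role of the hypothesis $v\ne0$, which is precisely what makes $v\cdot\eta$ nondegenerate and hence division by $\left\|v\right\|$ legitimate. An alternative route that avoids the change of measure would be to exploit the rotational invariance of $N_{0,1}^2$ to assume without loss of generality that $v=(\left\|v\right\|,0)^{\mathrm{T}}$, whereupon the two coordinates of $\xi$ decouple: the second-coordinate factor is just $\Expect\e^{u_2\xi_2}=\e^{u_2^2/2}$, and the first-coordinate factor is the one-dimensional integral of $\e^{u_1\xi_1}\III\{\xi_1\ge c/\left\|v\right\|\}$, which is handled by the same completing-the-square step; multiplying the two gives the same answer. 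I would present whichever of these is shorter in context, most likely the first.
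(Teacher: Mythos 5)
Your proof is correct and follows essentially the same route as the paper's: write out the Gaussian integral, complete the square to extract $\e^{\left\|u\right\|^2/2}$, shift by $u$ to reduce to $\Prob(v\cdot\eta\ge c-u\cdot v)$, normalise by $\left\|v\right\|$, and apply the symmetry of $F$. The sign bookkeeping and the use of $v\ne0$ are handled exactly as in the paper.
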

\begin{proof}
  This follows from
  \begin{align*}
    \Expect
    \left(
      \e^{u\cdot\xi}
      \III\{v\cdot\xi\ge c\}
    \right)
    &=
    \frac{1}{2\pi}
    \int_{\bbbr^2}
    \e^{u\cdot z}
    \III\{v\cdot z\ge c\}
    \e^{-\left\|z\right\|^2/2}
    \dd z\\
    &=
    \frac{1}{2\pi}
    \e^{\left\|u\right\|^2/2}
    \int_{\bbbr^2}
    \III\{v\cdot z\ge c\}
    \e^{-\left\|z-u\right\|^2/2}
    \dd z\\
    &=
    \frac{1}{2\pi}
    \e^{\left\|u\right\|^2/2}
    \int_{\bbbr^2}
    \III\left\{v\cdot w\ge c-u\cdot v\right\}
    \e^{-\left\|w\right\|^2/2}
    \dd w\\
    &=
    \e^{\left\|u\right\|^2/2}
    \Prob
    \left(
      \frac{v}{\left\|v\right\|}\cdot\xi
      \ge
      \frac{c-u\cdot v}{\left\|v\right\|}
    \right)\\
    &=
    \e^{\left\|u\right\|^2/2}
    F
    \left(
      \frac{u\cdot v-c}{\left\|v\right\|}
    \right).
    \qedhere
  \end{align*}
\end{proof}

Now we can rewrite (\ref{eq:to-continue}) as
$$
  F
  \left(
    \frac
    {
      T\bar\sigma_I\cdot(\bar\sigma_S-\bar\sigma_I)
      -
      \ln b
      -
      \frac{\left\|\bar\sigma_S\right\|^2-\left\|\bar\sigma_I\right\|^2}{2}
      T
    }
    {
      \left\|\sqrt{T}(\bar\sigma_S-\bar\sigma_I)\right\|
    }
  \right)
  =
  F
  \left(
    -\frac
    {
      \frac{\left\|\bar\sigma_S-\bar\sigma_I\right\|^2}{2} T
      +
      \ln b
    }
    {
      \left\|\bar\sigma_S-\bar\sigma_I\right\| \sqrt{T}
    }
  \right).
$$
Let us define $b$ by the requirement
$$
  \frac
  {
    \frac{\left\|\bar\sigma_S-\bar\sigma_I\right\|^2}{2}T
    +
    \ln b
  }
  {
    \left\|\bar\sigma_S-\bar\sigma_I\right\| \sqrt{T}
  }
  =
  z_{\delta/2},
$$
i.e.,
\begin{equation}\label{eq:b}
  \ln b
  =
  -\frac{\left\|\bar\sigma_S-\bar\sigma_I\right\|^2}{2}
  T
  +
  z_{\delta/2}
  \left\|\bar\sigma_S-\bar\sigma_I\right\|
  \sqrt{T}.
\end{equation}
As the Black--Scholes price of the European contingent claim $I_T\III\{S_T/I_T\ge b\}$
is $\delta/2$,
there is a prudent trading strategy $\Sigma_1$ with initial wealth $\delta/2$
that almost surely beats the index by a factor of $2/\delta$
if $S_T/I_T\ge b$.

Now let $a\in\bbbr$ and consider the European contingent claim
paying $I_T\III\{S_T/I_T\le a\}$.
Replacing ``${}\ge b$'' by ``${}\le a$'' and ``${}\ge\ln b$'' by ``${}\le\ln a$''
in (\ref{eq:to-continue})
and defining $a$ to satisfy
\begin{equation*} 
  \ln a
  =
  -\frac{\left\|\bar\sigma_S-\bar\sigma_I\right\|^2}{2}
  T
  -
  z_{\delta/2}
  \left\|\bar\sigma_S-\bar\sigma_I\right\|
  \sqrt{T}
\end{equation*}
in place of (\ref{eq:b}),
we obtain a prudent trading strategy $\Sigma_2$ that starts from $\delta/2$
and almost surely beats the index by a factor of $2/\delta$
if $S_T/I_T\le a$.
The sum $\Sigma:=\Sigma_1+\Sigma_2$ will beat the index by a factor of $1/\delta$
if $S_T/I_T\notin(a,b)$.
This completes the proof of Proposition~\ref{prop:2-sided}.

\section{Capital Asset Pricing Model}
\label{sec:CAPM}

In this section we will derive a version of the CAPM
from the results of the previous section.
Our argument will be similar to that of Section 3 of \cite{\CTvii}.
\begin{proposition}\label{prop:mu}
  For each $\delta>0$
  there exists a prudent trading strategy
  $\Sigma=\Sigma(\sigma_I,\sigma_S,r,T,\delta)$
  that satisfies the following condition.
  For each $\epsilon>0$, either
  \begin{equation}\label{eq:mu}
    \left|
      \mu_S - \mu_I + \left\|\sigma_I\right\|^2 - \sigma_S\cdot\sigma_I
    \right|
    <
    \frac{(z_{\delta/2}+z_{\epsilon}) \left\|\sigma_S-\sigma_I\right\|}{\sqrt{T}}
  \end{equation}
  or $\Sigma$ beats the index by a factor of at least $1/\delta$
  with probability at least $1-\epsilon$.
\end{proposition}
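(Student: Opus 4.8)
The plan is to pull Proposition~\ref{prop:1-sided} back to a statement about the drift parameters, using the explicit solution~(\ref{eq:solution-1}) of the model under the physical measure. As in the proof of Proposition~\ref{prop:2-sided} we may assume $\delta\in(0,1)$ (so that $z_{\delta/2}>0$), and we may also assume $\epsilon\in(0,1)$, the conclusion being vacuous for $\epsilon\ge1$. Set $A:=\mu_S-\mu_I+\left\|\sigma_I\right\|^2-\sigma_S\cdot\sigma_I$. Expanding $\left\|\sigma_S-\sigma_I\right\|^2=\left\|\sigma_S\right\|^2-2\sigma_S\cdot\sigma_I+\left\|\sigma_I\right\|^2$ in~(\ref{eq:solution-1}) yields
\[
  \ln\frac{S_T}{I_T}
  +
  \frac{\left\|\sigma_S-\sigma_I\right\|^2}{2}T
  =
  AT+(\sigma_S-\sigma_I)\cdot W_T .
\]
Under the physical measure $\xi:=(\sigma_S-\sigma_I)\cdot W_T/(\left\|\sigma_S-\sigma_I\right\|\sqrt{T})$ has the standard Gaussian distribution $N_{0,1}$, so the left-hand side above equals $AT+\left\|\sigma_S-\sigma_I\right\|\sqrt{T}\,\xi$.

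Next I would build $\Sigma$. Apply Proposition~\ref{prop:1-sided} with $\delta$ replaced by $\delta/2$: by the identity just obtained, its first strategy almost surely beats the index by a factor of $2/\delta$ on the event $E_+:=\{\xi\ge z_{\delta/2}-A\sqrt{T}/\left\|\sigma_S-\sigma_I\right\|\}$, and its second strategy almost surely beats the index by a factor of $2/\delta$ on $E_-:=\{\xi\le -z_{\delta/2}-A\sqrt{T}/\left\|\sigma_S-\sigma_I\right\|\}$. Rescale each of these two strategies so that its initial wealth is $1$, and put $\Sigma:=\tfrac12\Sigma_++\tfrac12\Sigma_-$; this is a prudent self-financing strategy with initial wealth $1$ depending only on $\sigma_I,\sigma_S,r,T,\delta$, and on $E_+$ (resp.\ $E_-$) its final wealth is at least half the final wealth of the rescaled first (resp.\ second) strategy, hence at least $(1/\delta)I_T$. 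Thus $\Sigma$ beats the index by a factor of at least $1/\delta$ on $E_+\cup E_-$. Using $\delta/2$ rather than $\delta$ here is precisely what offsets the halving incurred by averaging two one-sided strategies, and it is the origin of $z_{\delta/2}$ in~(\ref{eq:mu}).

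Finally, fix $\epsilon\in(0,1)$ and suppose~(\ref{eq:mu}) fails, i.e.\ $|A|\ge(z_{\delta/2}+z_{\epsilon})\left\|\sigma_S-\sigma_I\right\|/\sqrt{T}$. If $A\ge0$ then $z_{\delta/2}-A\sqrt{T}/\left\|\sigma_S-\sigma_I\right\|\le -z_{\epsilon}$, so, by the symmetry of $N_{0,1}$ and the definition of $z_{\epsilon}$,
\[
  \Prob(E_+)\ge\Prob(\xi\ge -z_{\epsilon})=\Prob(\xi\le z_{\epsilon})=1-\epsilon .
\]
If $A\le0$ the same bound holds for $\Prob(E_-)$, symmetrically. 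Hence $\Sigma$ beats the index by a factor of at least $1/\delta$ with probability at least $\Prob(E_+\cup E_-)\ge1-\epsilon$, which is the second alternative. The only real obstacle is the bookkeeping in the middle paragraph: one must keep the guaranteed factor equal to $1/\delta$ after combining the two one-sided strategies (which forces the $\delta/2$, and hence the $z_{\delta/2}$), and one must check that the \emph{single} strategy $\Sigma$ serves all $\epsilon$ at once --- which it does, because $E_{\pm}$ and $\Sigma$ involve $\delta$ only, not $\epsilon$.
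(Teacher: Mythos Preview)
Your proof is correct and follows essentially the same route as the paper. The only cosmetic difference is that the paper takes $\Sigma$ directly to be the two-sided strategy of Proposition~\ref{prop:2-sided} (which already beats the index by $1/\delta$ off the event~(\ref{eq:2-sided})), whereas you rebuild that same strategy by averaging the two one-sided strategies of Proposition~\ref{prop:1-sided} applied with $\delta/2$; the resulting strategy and the event $E_+\cup E_-$ on which it succeeds coincide with the paper's, and the probability estimate is then identical.
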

\begin{proof}
  Suppose (\ref{eq:mu}) is violated;
  we are required to prove that some prudent trading strategy
  beats the index by a factor of at least $1/\delta$
  with probability at least $1-\epsilon$.
  We have either
  \begin{equation}\label{eq:mu-1}
    \mu_S - \mu_I + \left\|\sigma_I\right\|^2 - \sigma_S\cdot\sigma_I
    \ge
    \frac{(z_{\delta/2}+z_{\epsilon}) \left\|\sigma_S-\sigma_I\right\|}{\sqrt{T}}
  \end{equation}
  or
  \begin{equation}\label{eq:mu-2}
    \mu_S - \mu_I + \left\|\sigma_I\right\|^2 - \sigma_S\cdot\sigma_I
    \le
    -\frac{(z_{\delta/2}+z_{\epsilon}) \left\|\sigma_S-\sigma_I\right\|}{\sqrt{T}}.
  \end{equation}
  The two cases are analogous, and we will assume, for concreteness,
  that (\ref{eq:mu-1}) holds.

  As (\ref{eq:solution-1}) solves (\ref{eq:physical-1}),
  we have
  \begin{equation}\label{eq:ln-ratio}
    \ln\frac{S_T}{I_T}
    =
    (\mu_S-\mu_I)T
    +
    \frac{\left\|\sigma_I\right\|^2-\left\|\sigma_S\right\|^2}{2}T
    +
    \sqrt{T}(\sigma_S-\sigma_I)\cdot\xi,
  \end{equation}
  where $\xi\sim N_{0,1}^d$.
  In combination with (\ref{eq:mu-1}) this gives
  \begin{align}
    \ln\frac{S_T}{I_T}
    &\ge
    \left(
      -\left\|\sigma_I\right\|^2 + \sigma_S\cdot\sigma_I
      +
      \frac{(z_{\delta/2}+z_{\epsilon}) \left\|\sigma_S-\sigma_I\right\|}{\sqrt{T}}
    \right)
    T\notag\\
    &\quad+
    \frac{\left\|\sigma_I\right\|^2-\left\|\sigma_S\right\|^2}{2}T
    +
    \sqrt{T}(\sigma_S-\sigma_I)\cdot\xi\notag\\
    &=
    -\frac{\left\|\sigma_S-\sigma_I\right\|^2}{2}T
    +
    (z_{\delta/2}+z_{\epsilon}) \left\|\sigma_S-\sigma_I\right\| \sqrt{T}
    +
    \sqrt{T}(\sigma_S-\sigma_I)\cdot\xi.
    \label{eq:intermediate-1}
  \end{align}

  Let $\Sigma$ be a prudent trading strategy that, almost surely,
  beats the index by a factor of $1/\delta$
  unless (\ref{eq:2-sided}) holds.
  It is sufficient to prove that the probability of (\ref{eq:2-sided})
  is at most $\epsilon$.
  In combination with (\ref{eq:intermediate-1}),
  (\ref{eq:2-sided}) implies
  \begin{equation}\label{eq:intermediate-2}
    z_{\delta/2} \left\|\sigma_S-\sigma_I\right\| \sqrt{T}
    >
    (z_{\delta/2}+z_{\epsilon}) \left\|\sigma_S-\sigma_I\right\| \sqrt{T}
    +
    \sqrt{T}(\sigma_S-\sigma_I)\cdot\xi,
  \end{equation}
  i.e.,
  \begin{equation}\label{eq:intermediate-3}
    \frac{\sigma_S-\sigma_I}{\left\|\sigma_S-\sigma_I\right\|}
    \cdot
    \xi
    <
    -z_{\epsilon}.
  \end{equation}
  The probability of the last event is $\epsilon$.
\end{proof}

Allowing the strategy $\Sigma$ to depend, additionally,
on $\mu_I$, $\mu_S$, and $\epsilon$,
we can improve (\ref{eq:mu}) replacing $\delta/2$ by $\delta$.
\begin{proposition}\label{prop:mu-bis}
  Let $\delta>0$ and $\epsilon>0$.
  Unless
  \begin{equation}\label{eq:mu-bis}
    \left|
      \mu_S - \mu_I + \left\|\sigma_I\right\|^2 - \sigma_S\cdot\sigma_I
    \right|
    <
    \frac{(z_{\delta}+z_{\epsilon}) \left\|\sigma_S-\sigma_I\right\|}{\sqrt{T}},
  \end{equation}
  there exists a prudent trading strategy
  $\Sigma=\Sigma(\mu_I,\mu_S,\sigma_I,\sigma_S,r,T,\delta,\epsilon)$
  that beats the index by a factor of at least $1/\delta$
  with probability at least $1-\epsilon$.
\end{proposition}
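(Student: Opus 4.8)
The plan is to run the proof of Proposition~\ref{prop:mu} again, but with the one-sided estimates of Proposition~\ref{prop:1-sided} in place of the two-sided estimate (\ref{eq:2-sided}) of Proposition~\ref{prop:2-sided}. In the proof of Proposition~\ref{prop:mu} the factor $\delta/2$ appeared only because the two-sided bound (\ref{eq:2-sided}) is obtained by splitting the initial capital $\delta$ between a strategy $\Sigma_1$ that cashes in on $\{S_T/I_T\ge b\}$ and a strategy $\Sigma_2$ that cashes in on $\{S_T/I_T\le a\}$. Since $\Sigma$ is now allowed to depend on $\mu_I$ and $\mu_S$, we know in advance which of the two relevant inequalities holds, so we only ever need a one-sided payoff and no splitting is required; Proposition~\ref{prop:1-sided} then supplies exactly what we need, with $z_{\delta}$ rather than $z_{\delta/2}$.

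Here are the steps. Assuming (\ref{eq:mu-bis}) fails, I would first split into the two cases (\ref{eq:mu-1}) and (\ref{eq:mu-2}) with $z_{\delta/2}$ replaced by $z_{\delta}$, and treat only the first (the second being symmetric). Combining this inequality with (\ref{eq:ln-ratio}) exactly as in the derivation of (\ref{eq:intermediate-1}) — the algebraic identity $-\left\|\sigma_I\right\|^2+\sigma_S\cdot\sigma_I+(\left\|\sigma_I\right\|^2-\left\|\sigma_S\right\|^2)/2=-\left\|\sigma_S-\sigma_I\right\|^2/2$ being the only computation — gives the lower bound
\[
  \ln\frac{S_T}{I_T}
  \ge
  -\frac{\left\|\sigma_S-\sigma_I\right\|^2}{2}T
  +
  (z_{\delta}+z_{\epsilon}) \left\|\sigma_S-\sigma_I\right\| \sqrt{T}
  +
  \sqrt{T}(\sigma_S-\sigma_I)\cdot\xi,
\]
with $\xi\sim N_{0,1}^d$. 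Then I would take $\Sigma$ to be the first prudent trading strategy furnished by Proposition~\ref{prop:1-sided}, which almost surely beats the index by a factor of $1/\delta$ unless (\ref{eq:1-sided}) holds; so it remains only to bound the probability of (\ref{eq:1-sided}). Plugging the displayed lower bound into (\ref{eq:1-sided}), cancelling $z_{\delta}\left\|\sigma_S-\sigma_I\right\|\sqrt{T}$ and dividing by $\left\|\sigma_S-\sigma_I\right\|\sqrt{T}>0$, one obtains $\bigl((\sigma_S-\sigma_I)/\left\|\sigma_S-\sigma_I\right\|\bigr)\cdot\xi<-z_{\epsilon}$, an event of probability exactly $\epsilon$ because the left-hand side is standard Gaussian. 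In case (\ref{eq:mu-2}) one uses the second strategy of Proposition~\ref{prop:1-sided} in the same way.

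I do not expect any genuine obstacle: this is the argument for Proposition~\ref{prop:mu} with the $z_{\delta/2}$'s turned into $z_{\delta}$'s and Proposition~\ref{prop:1-sided} invoked instead of Proposition~\ref{prop:2-sided}. The only points to watch are the sign bookkeeping that pairs case (\ref{eq:mu-1}) with the first strategy and case (\ref{eq:mu-2}) with the second, and the routine fact that $\bigl((\sigma_S-\sigma_I)/\left\|\sigma_S-\sigma_I\right\|\bigr)\cdot\xi$ is genuinely $N_{0,1}$-distributed, which is where the standing assumption $\sigma_I\ne\sigma_S$ is used. It is perhaps worth remarking that the $\Sigma$ produced this way depends on $\mu_I,\mu_S$ only through the choice between the two strategies of Proposition~\ref{prop:1-sided} and does not depend on $\epsilon$ at all, so the $\epsilon$ in its list of arguments could in fact be dropped.
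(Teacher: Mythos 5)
Your proposal is correct and is essentially the paper's own proof: the paper likewise replaces the two-sided strategy by the appropriate one-sided strategy of Proposition~\ref{prop:1-sided}, reruns the computation leading to (\ref{eq:intermediate-1})--(\ref{eq:intermediate-3}) with $z_{\delta}$ in place of $z_{\delta/2}$, and notes that the choice between the two one-sided strategies is what introduces the dependence on $\mu_I$ and $\mu_S$. Your added observation that the dependence on $\epsilon$ is vacuous is a correct (and harmless) refinement.
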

\begin{proof}
  We modify slightly the proof of Proposition~\ref{prop:mu}:
  as $\Sigma$ we now take a prudent trading strategy 
  that, almost surely, beats the index by a factor of $1/\delta$
  unless (\ref{eq:1-sided}) holds.
  Combining (\ref{eq:intermediate-1}) with $\delta$ in place of $\delta/2$
  and (\ref{eq:1-sided})
  we get (\ref{eq:intermediate-2}) with $\delta$ in place of $\delta/2$,
  and we still have (\ref{eq:intermediate-3}).
  Notice that $\Sigma$ now depends on which of the two cases,
  (\ref{eq:mu-1}) or (\ref{eq:mu-2})
  (with $\delta$ in place of $\delta/2$),
  holds.
\end{proof}

Propositions~\ref{prop:mu} and~\ref{prop:mu-bis} contain a version of the CAPM.
But before stating CAPM-type results formally
as corollaries of Proposition~\ref{prop:mu-bis}
(we do not state the analogous easy corollaries of Proposition~\ref{prop:mu}),
we will discuss them informally,
to give us a sense of direction.

Assuming $\delta\ll1$, $\epsilon\ll1$, and $T\gg1$,
we can interpret (\ref{eq:mu-bis}) as saying that
\begin{equation}\label{eq:1}
  \mu_S
  \approx
  \mu_I - \left\|\sigma_I\right\|^2 + \sigma_S\cdot\sigma_I.
\end{equation}
This approximate equality is applicable to the bond as well as the stock
(by results of \cite{\CTvii}),
which gives
\begin{equation}\label{eq:2}
  \mu_I
  \approx
  r + \left\|\sigma_I\right\|^2.
\end{equation}
Combining (\ref{eq:1}) and (\ref{eq:2}) we obtain
\begin{equation}\label{eq:3}
  \mu_S
  \approx
  r + \sigma_S\cdot\sigma_I.
\end{equation}
And combining (\ref{eq:3}) and (\ref{eq:2}) we obtain
\begin{equation}\label{eq:4}
  \mu_S
  \approx
  r
  +
  \frac{\sigma_S\cdot\sigma_I}{\left\|\sigma_I\right\|^2}
  (\mu_I-r).
\end{equation}

Equation (\ref{eq:4}) is a continuous-time version of the CAPM.
The standard Sharpe--Lintner CAPM (see, e.g., \cite{fama/french:2004}, pp.~28--29)
can be written in the form
\begin{equation}\label{eq:5}
  \Expect(R_S)
  =
  r
  +
  \frac{\cov(R_S,R_I)}{\sigma^2(R_I)}
  \left(
    \Expect(R_I) - r
  \right),
\end{equation}
where $R_S$ and $R_I$ are the returns of a risky asset and the market portfolio,
respectively.
The correspondence between (\ref{eq:4}) and (\ref{eq:5})
is obvious\ifnotFULL.\fi\ifFULL\bluebegin,
  except perhaps the correspondence between $\sigma_S\cdot\sigma_I$
  and $\cov(R_S,R_I)$.
  Heuristically, from (\ref{eq:physical-1})
  we can see that the covariance between the returns $\dd I_t/I_t$ and $\dd S_t/S_t$
  over the infinitesimal period $\dd t$ is $(\sigma_S\cdot\sigma_I)\dd t$.
  And formally, $(\sigma_S\cdot\sigma_I)t$ is the quadratic covariation process
  between $\int_0^t\dd S_s/S_s$ and $\int_0^t\dd I_s/I_s$
  (equivalently, between $\ln S_t$ and $\ln I_t$).
\blueend\fi

Now we state formal counterparts of (\ref{eq:2})--(\ref{eq:4}).
The following proposition,
which would have been a corollary of Proposition~\ref{prop:mu-bis}
had we allowed $\sigma_S=0$,
is proved in \cite{\CTvii}, Proposition~3.2.
\begin{proposition}\label{prop:index}
  Let $\delta>0$ and $\epsilon>0$.
  Unless
  \begin{equation}\label{eq:index}
    \left|
      \mu_I - r - \left\|\sigma_I\right\|^2
    \right|
    <
    \frac{(z_{\delta}+z_{\epsilon}) \left\|\sigma_I\right\|}{\sqrt{T}},
  \end{equation}
  there exists a prudent trading strategy
  $\Sigma=\Sigma(\mu_I,\sigma_I,r,T,\delta,\epsilon)$
  that beats the index by a factor of at least $1/\delta$
  with probability at least $1-\epsilon$.
\end{proposition}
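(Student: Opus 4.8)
The plan is to mimic the proof of Proposition~\ref{prop:mu-bis}, taking the bond $B_t:=\e^{rt}$ in the role of the stock $S_t$. First I would note that substituting $\sigma_S=0$ and $\mu_S=r$ into~(\ref{eq:mu-bis}) reproduces~(\ref{eq:index}) verbatim, since then $\left\|\sigma_S-\sigma_I\right\|=\left\|\sigma_I\right\|$, $\sigma_S\cdot\sigma_I=0$, and $\mu_S-\mu_I+\left\|\sigma_I\right\|^2-\sigma_S\cdot\sigma_I=-(\mu_I-r-\left\|\sigma_I\right\|^2)$. So the task reduces to checking that the two ingredients used in the proof of Proposition~\ref{prop:mu-bis} survive the degeneracy $\sigma_S=0$. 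The first ingredient is the pair of one-sided prudent strategies of Proposition~\ref{prop:1-sided}. Revisiting the proof of Proposition~\ref{prop:2-sided}, I would carry out the reduction with $e^1:=\sigma_I/\left\|\sigma_I\right\|$ and $e^2$ any unit vector orthogonal to it (possible since $d\ge2$); then $\bar\sigma_S=0$, the direction $v=\sqrt{T}(\bar\sigma_S-\bar\sigma_I)=-\sqrt{T}\bar\sigma_I$ entering the Lemma is nonzero because $\sigma_I\ne0$, and the rest of the computation is unchanged. This yields, for each $\delta>0$, a prudent strategy that almost surely beats the index by a factor of $1/\delta$ unless $\ln(B_T/I_T)+\left\|\sigma_I\right\|^2T/2<z_\delta\left\|\sigma_I\right\|\sqrt{T}$, and another one unless $\ln(B_T/I_T)+\left\|\sigma_I\right\|^2T/2>-z_\delta\left\|\sigma_I\right\|\sqrt{T}$ (alternatively these can be quoted from \cite{\CTvii}, Proposition~2.1, exactly as~(\ref{eq:recover}) is).

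The second ingredient is the bond analog of~(\ref{eq:ln-ratio}): from~(\ref{eq:solution-1}) with $W_T=\sqrt{T}\xi$, $\xi\sim N_{0,1}^d$, one gets
\begin{equation*}
  \ln\frac{B_T}{I_T}+\frac{\left\|\sigma_I\right\|^2}{2}T
  =
  \left(r-\mu_I+\left\|\sigma_I\right\|^2\right)T-\sqrt{T}\,\sigma_I\cdot\xi.
\end{equation*}
Now suppose~(\ref{eq:index}) fails. I would split into the symmetric cases $r-\mu_I+\left\|\sigma_I\right\|^2\ge(z_\delta+z_\epsilon)\left\|\sigma_I\right\|/\sqrt{T}$ and $r-\mu_I+\left\|\sigma_I\right\|^2\le-(z_\delta+z_\epsilon)\left\|\sigma_I\right\|/\sqrt{T}$, and treat the first. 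There the identity above gives $\ln(B_T/I_T)+\left\|\sigma_I\right\|^2T/2\ge(z_\delta+z_\epsilon)\left\|\sigma_I\right\|\sqrt{T}-\sqrt{T}\,\sigma_I\cdot\xi$; if $\Sigma$ is the prudent strategy that almost surely beats the index by $1/\delta$ unless $\ln(B_T/I_T)+\left\|\sigma_I\right\|^2T/2<z_\delta\left\|\sigma_I\right\|\sqrt{T}$, then on that exceptional event these two inequalities combine into $\frac{\sigma_I}{\left\|\sigma_I\right\|}\cdot\xi>z_\epsilon$, an event of probability exactly $\epsilon$; hence $\Sigma$ beats the index by a factor of at least $1/\delta$ with probability at least $1-\epsilon$. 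In the other case one uses the second one-sided strategy and arrives at $\frac{\sigma_I}{\left\|\sigma_I\right\|}\cdot\xi<-z_\epsilon$, again of probability $\epsilon$.

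There is essentially no hard part: were $\sigma_S=0$ permitted, the statement would be a literal instance of Proposition~\ref{prop:mu-bis}, and the argument above merely verifies that the construction of Section~\ref{sec:TPD} does not break at $\sigma_S=0$. The only point deserving a moment's attention is precisely this degeneracy — confirming that the two-dimensional reduction and the Lemma still deliver the required one-sided prudent strategies when $\sigma_S$ vanishes, which they do since $\sigma_I\ne0$. Everything else is the substitution $(\mu_S,\sigma_S)\mapsto(r,0)$ together with the same elementary Gaussian tail bound used in Propositions~\ref{prop:mu} and~\ref{prop:mu-bis}.
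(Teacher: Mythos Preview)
Your proposal is correct and is precisely the approach the paper itself singles out: the paper does not give an in-text proof of Proposition~\ref{prop:index} at all but defers to \cite{\CTvii}, Proposition~3.2, remarking that the statement ``would have been a corollary of Proposition~\ref{prop:mu-bis} had we allowed $\sigma_S=0$''. Your write-up simply carries out that remark, checking that the one-sided construction of Section~\ref{sec:TPD} (and the Lemma, which needs $v\ne0$) survives the degeneracy $\sigma_S=0$ because $\sigma_I\ne0$; so there is no divergence in method to report.
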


The following two corollaries of Proposition~\ref{prop:mu-bis}
assert existence of trading strategies
that depend on ``everything'',
namely, on $\mu_I$, $\mu_S$, $\sigma_I$, $\sigma_S$, $r$, $T$, $\delta$, and $\epsilon$.
The first corollary formalizes (\ref{eq:3}).
\begin{corollary}\label{cor:CAPM-1}
  Let $\delta>0$ and $\epsilon>0$.
  Unless
  \begin{equation}\label{eq:CAPM-1}
    \left|
      \mu_S - r - \sigma_S\cdot\sigma_I
    \right|
    <
    (z_{\delta}+z_{\epsilon})
    \frac{\left\|\sigma_I\right\|+\left\|\sigma_S-\sigma_I\right\|}{\sqrt{T}},
  \end{equation}
  there exists a prudent trading strategy
  that beats the index by a factor of at least $\frac{1}{2\delta}$
  with probability at least $1-\epsilon$.
\end{corollary}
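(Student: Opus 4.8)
The plan is to obtain Corollary~\ref{cor:CAPM-1} by adding the two approximate identities already established. Proposition~\ref{prop:mu-bis} says, in effect, that $\mu_S-\mu_I+\left\|\sigma_I\right\|^2-\sigma_S\cdot\sigma_I$ is small, and Proposition~\ref{prop:index} says that $\mu_I-r-\left\|\sigma_I\right\|^2$ is small; adding these cancels both $\mu_I$ and $\left\|\sigma_I\right\|^2$ and leaves exactly $\mu_S-r-\sigma_S\cdot\sigma_I$, which is~(\ref{eq:3}). The cost of the cancellation is that the two error bounds add (via the triangle inequality, giving the right-hand side of~(\ref{eq:CAPM-1})) and that the unit of initial capital has to be split between the two underlying trading strategies (giving the factor $1/2$ in the conclusion).

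Concretely I would argue by contraposition: assume~(\ref{eq:CAPM-1}) fails. From
\[
  \left|\mu_S-r-\sigma_S\cdot\sigma_I\right|
  \le
  \left|\mu_S-\mu_I+\left\|\sigma_I\right\|^2-\sigma_S\cdot\sigma_I\right|
  +
  \left|\mu_I-r-\left\|\sigma_I\right\|^2\right|
\]
it follows that at least one of~(\ref{eq:mu-bis}),~(\ref{eq:index}) must fail as well, since otherwise the two summands on the right would be bounded by $(z_\delta+z_\epsilon)\left\|\sigma_S-\sigma_I\right\|/\sqrt{T}$ and $(z_\delta+z_\epsilon)\left\|\sigma_I\right\|/\sqrt{T}$, respectively, and their sum would force~(\ref{eq:CAPM-1}). (Proposition~\ref{prop:index}, quoted from~\cite{\CTvii}, is applicable here in spite of our standing assumption $\sigma_S\ne0$, as it concerns only the index and the bond.)

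Then I would combine the two strategies into one. If~(\ref{eq:mu-bis}) fails let $\Sigma_1$ be the strategy provided by Proposition~\ref{prop:mu-bis}, and otherwise let $\Sigma_1$ be the trivial strategy that keeps its initial capital in the bond; define $\Sigma_2$ analogously from Proposition~\ref{prop:index} and~(\ref{eq:index}). Rescale each so that its initial wealth equals $1/2$, and put $\Sigma:=\Sigma_1+\Sigma_2$, a prudent self-financing strategy with initial wealth $\K_0=1$ that may depend on $\mu_I,\mu_S,\sigma_I,\sigma_S,r,T,\delta,\epsilon$, as permitted. Since at least one of~(\ref{eq:mu-bis}),~(\ref{eq:index}) fails, at least one of $\Sigma_1,\Sigma_2$ genuinely beats the index by a factor of $1/\delta$ with probability at least $1-\epsilon$; on that event its terminal wealth is at least $(1/\delta)(1/2)I_T=\frac{1}{2\delta}I_T$, and as the other summand has nonnegative terminal wealth we get $\K_T\ge\frac{1}{2\delta}I_T=\frac{1}{2\delta}I_T\K_0$. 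Thus $\Sigma$ beats the index by a factor of at least $\frac{1}{2\delta}$ with probability at least $1-\epsilon$, which is what we wanted.

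I do not expect a genuine obstacle here: this is essentially bookkeeping built on Propositions~\ref{prop:mu-bis} and~\ref{prop:index}. The only points that need a word of care are that a sum of prudent self-financing strategies is again prudent and self-financing (wealth adds and nonnegativity is preserved) and the tracking of constants --- in particular, the $1/2$ split of the initial capital is precisely what turns the factor $1/\delta$ of the component strategies into the $1/(2\delta)$ of the statement.
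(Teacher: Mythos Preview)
Your argument is correct and essentially identical to the paper's: both combine the strategies from Propositions~\ref{prop:mu-bis} and~\ref{prop:index} additively, use the triangle inequality to show that the failure of~(\ref{eq:CAPM-1}) forces the failure of at least one of~(\ref{eq:mu-bis}),~(\ref{eq:index}), and absorb the doubling of initial capital into the factor $1/(2\delta)$. The only cosmetic difference is that the paper always takes both component strategies (they are prudent regardless, just not guaranteed to beat the index when their respective inequality holds), whereas you substitute a trivial bond-holding strategy in the inactive slot; this changes nothing.
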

\begin{proof}
  Let $\Sigma_1$ be a prudent trading strategy
  satisfying the condition of Proposition~\ref{prop:mu-bis},
  and let $\Sigma_2$ be a prudent trading strategy
  satisfying the condition of Proposition~\ref{prop:index}.
  Without loss of generality suppose that the initial wealth
  of both strategies is 1.
  Then $\Sigma_1+\Sigma_2$ will beat the index
  by a factor of at least $\frac{1}{2\delta}$
  with probability at least $1-\epsilon$
  unless both (\ref{eq:mu-bis}) and (\ref{eq:index}) hold.
  The conjunction of (\ref{eq:mu-bis}) and (\ref{eq:index})
  implies (\ref{eq:CAPM-1}).
\end{proof}

Finally, we have a corollary formalizing the CAPM (\ref{eq:4}).
\begin{corollary} 
  Let $\delta>0$ and $\epsilon>0$.
  Unless
  \begin{equation*} 
    \left|
      \mu_S - r - \frac{\sigma_S\cdot\sigma_I}{\left\|\sigma_I\right\|^2} (\mu_I-r)
    \right|
    \le
    (z_{\delta}+z_{\epsilon})
    \frac
    {
      \left\|\sigma_I\right\|
      +
      \left\|\sigma_S\right\|
      +
      \left\|\sigma_S-\sigma_I\right\|
    }
    {\sqrt{T}},
  \end{equation*}
  there exists a prudent trading strategy
  that beats the index by a factor of at least $\frac{1}{3\delta}$
  with probability at least $1-\epsilon$.
\end{corollary}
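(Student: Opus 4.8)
The plan is to imitate the proof of Corollary~\ref{cor:CAPM-1}, now stacking it on top of Proposition~\ref{prop:index}. The algebraic heart of the argument is the identity
\begin{equation*}
  \mu_S - r - \frac{\sigma_S\cdot\sigma_I}{\left\|\sigma_I\right\|^2}\left(\mu_I-r\right)
  =
  \left(\mu_S - r - \sigma_S\cdot\sigma_I\right)
  -
  \frac{\sigma_S\cdot\sigma_I}{\left\|\sigma_I\right\|^2}
  \left(\mu_I - r - \left\|\sigma_I\right\|^2\right),
\end{equation*}
combined with the Cauchy--Schwarz inequality $\left|\sigma_S\cdot\sigma_I\right| \le \left\|\sigma_S\right\|\left\|\sigma_I\right\|$. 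If both (\ref{eq:CAPM-1}) and (\ref{eq:index}) hold, then applying the triangle inequality to this identity, bounding the first summand by (\ref{eq:CAPM-1}) and the second (after the estimate $\left|\sigma_S\cdot\sigma_I\right|/\left\|\sigma_I\right\|^2 \le \left\|\sigma_S\right\|/\left\|\sigma_I\right\|$) by (\ref{eq:index}), yields exactly the inequality in the statement of the corollary, the factor $\left\|\sigma_S\right\|/\left\|\sigma_I\right\|$ multiplying the right-hand side of (\ref{eq:index}) producing precisely the extra $\left\|\sigma_S\right\|$ term. Hence the conjunction of (\ref{eq:CAPM-1}) and (\ref{eq:index}) implies the conclusion of the corollary.

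Next I would assemble the trading strategy. By Corollary~\ref{cor:CAPM-1} there is a prudent trading strategy $\Sigma_1$, which I rescale to have initial wealth $2$, such that, unless (\ref{eq:CAPM-1}) holds, its terminal wealth is at least $\frac{1}{\delta}I_T$ with probability at least $1-\epsilon$. By Proposition~\ref{prop:index} there is a prudent trading strategy $\Sigma_2$ with initial wealth $1$ such that, unless (\ref{eq:index}) holds, its terminal wealth is at least $\frac{1}{\delta}I_T$ with probability at least $1-\epsilon$. Set $\Sigma:=\Sigma_1+\Sigma_2$, a prudent trading strategy with initial wealth $3$. Assume the inequality in the corollary fails; then by the previous paragraph at least one of (\ref{eq:CAPM-1}), (\ref{eq:index}) fails, so one of $\Sigma_1$, $\Sigma_2$ has terminal wealth at least $\frac{1}{\delta}I_T$ with probability at least $1-\epsilon$, and, since the other summand has almost surely nonnegative wealth, $\Sigma$ itself has terminal wealth at least $\frac{1}{\delta}I_T$ with probability at least $1-\epsilon$; equivalently, $\Sigma$ beats the index by a factor of at least $\frac{1}{3\delta}$ with probability at least $1-\epsilon$.

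I do not anticipate a genuine obstacle here: the argument is a routine stacking of Corollary~\ref{cor:CAPM-1} on Proposition~\ref{prop:index}, entirely parallel to the way Corollary~\ref{cor:CAPM-1} stacks Proposition~\ref{prop:mu-bis} on Proposition~\ref{prop:index}. The only points requiring care are (i) verifying that the Cauchy--Schwarz step delivers exactly the $\left\|\sigma_S\right\|$ contribution appearing in the stated bound (and that the strictness in (\ref{eq:CAPM-1}), (\ref{eq:index}) is compatible with the non-strict inequality claimed), and (ii) the bookkeeping of initial wealths, chosen as $2$ and $1$ so that each summand alone already guarantees a terminal wealth of $\frac{1}{\delta}I_T$, whence the combined terminal-to-initial wealth ratio is at least $\frac{1}{3\delta}$.
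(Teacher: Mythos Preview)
Your proposal is correct and follows essentially the same route as the paper: the same algebraic identity combined with Cauchy--Schwarz to show that (\ref{eq:index}) and (\ref{eq:CAPM-1}) together imply the corollary's bound, and the same summing of the strategies from Proposition~\ref{prop:index} (initial wealth $1$) and Corollary~\ref{cor:CAPM-1} (initial wealth $2$) to obtain a strategy with initial wealth $3$. The only difference is that you swap the labels $\Sigma_1$ and $\Sigma_2$, which is immaterial.
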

\begin{proof}
  Let $\Sigma_1$ be a prudent trading strategy
  satisfying the condition of Proposition~\ref{prop:index}
  and $\Sigma_2$ be a prudent trading strategy
  satisfying the condition of Corollary~\ref{cor:CAPM-1}.
  Without loss of generality suppose that the initial wealth of $\Sigma_1$ is 1
  and the initial wealth of $\Sigma_2$ is 2.
  Then $\Sigma_1+\Sigma_2$ will beat the index
  by a factor of at least $\frac{1}{3\delta}$
  with probability at least $1-\epsilon$
  unless both (\ref{eq:index}) and (\ref{eq:CAPM-1}) hold.
  The conjunction of (\ref{eq:index}) and (\ref{eq:CAPM-1}) implies
  \begin{align*}
    &\left|
      \mu_S - r - \frac{\sigma_S\cdot\sigma_I}{\left\|\sigma_I\right\|^2} (\mu_I-r)
    \right|\\
    &\le
    \left|
      \mu_S - r
      -
      \frac{\sigma_S\cdot\sigma_I}{\left\|\sigma_I\right\|^2}
      \left\|\sigma_I\right\|^2
    \right|
    +
    \frac{\left|\sigma_S\cdot\sigma_I\right|}{\left\|\sigma_I\right\|^2}
    \frac{(z_{\delta}+z_{\epsilon}) \left\|\sigma_I\right\|}{\sqrt{T}}\\
    &\le
    (z_{\delta}+z_{\epsilon})
    \frac{\left\|\sigma_I\right\|+\left\|\sigma_S-\sigma_I\right\|}{\sqrt{T}}
    +
    \frac{\left|\sigma_S\cdot\sigma_I\right|}{\left\|\sigma_I\right\|}
    \frac{(z_{\delta}+z_{\epsilon})}{\sqrt{T}}\\
    &\le
    (z_{\delta}+z_{\epsilon})
    \frac
    {
      \left\|\sigma_I\right\|
      +
      \left\|\sigma_S\right\|
      +
      \left\|\sigma_S-\sigma_I\right\|
    }
    {\sqrt{T}}.
    \qedhere
  \end{align*}
\end{proof}

\ifFULL\bluebegin
  How is this related to the market price of risk?
  The latter exists even in incomplete markets:
  see \cite{bjork:2004}, p.~220.
  But this must be an artefact of his assumption that there is only one (not tradable)
  source of randomness.
\blueend\fi

\section{Conclusion}
\label{sec:conclusion}

Let us summarize our results at the informal level of approximate equalities
such as (\ref{eq:1})--(\ref{eq:4}).
At this level, our only two results are the CAPM (\ref{eq:4})
and the equity premium relation (\ref{eq:2}) (established earlier in \cite{\CTvii});
the rest follows.
Indeed, (\ref{eq:4}) and (\ref{eq:2}) imply (\ref{eq:3}),
and (\ref{eq:3}) and (\ref{eq:2}) imply (\ref{eq:1}).
The crude form (\ref{eq:TPD}) of (\ref{eq:2-sided})
also follows from (\ref{eq:4}) and (\ref{eq:2}):
just combine the crude form
\begin{equation*}
  \ln\frac{S_T}{I_T}
  \approx
  (\mu_S-\mu_I)T
  +
  \frac{\left\|\sigma_I\right\|^2-\left\|\sigma_S\right\|^2}{2}T
\end{equation*}
of (\ref{eq:ln-ratio}) with (\ref{eq:1}).

An alternative, simpler, summary of our results at the informal level
is given by the approximate equality (\ref{eq:3})
in which we allow $S=I$.
We can allow $S=I$ even in Corollary~\ref{cor:CAPM-1}:
when $S=I$, it reduces to Proposition~\ref{prop:index}.
The approximate equality (\ref{eq:3}) implies both (\ref{eq:2})
(it is a special case for $S:=I$)
and (\ref{eq:4})
(combine (\ref{eq:3}) and (\ref{eq:2})).
Therefore, at the informal level
Corollary~\ref{cor:CAPM-1} is the core result of this article.

\ifWP
  One interesting direction of further research
  is to derive probability-free and continuous-time versions of our results
  (e.g., in the framework of \cite{\CTiv}).
  The results of \cite{\GTPi} are probability-free
  and very similar to the results of this article,
  but the discrete-time framework of \cite{\GTPi}
  makes them mathematically unattractive.
  The results of \cite{\GTPii} are probability-free,
  very similar to the results of this article,
  and are stated and proved in a continuous-time framework;
  they, however, use nonstandard analysis.
\fi

\section*{Acknowledgments}

\acknowledge

\end{document}